\documentclass[letterpaper, 10 pt, conference]{ieeeconf}
\usepackage{amsmath,amsfonts,amssymb,mathtools}
\usepackage{graphicx}
\usepackage{BernsteinStyle}
\usepackage[hidelinks]{hyperref}
\usepackage{cite}

\IEEEoverridecommandlockouts

\newcommand{\LCSSLineStretch}{0.990}
\newcommand{\LCSSDisplaySkip}{3pt plus 1pt minus 1pt}
\newcommand{\LCSSShortDisplaySkip}{2pt plus 1pt minus 1pt}
\newcommand{\LCSSJot}{1.5pt}
\newcommand{\LCSSTextFloatSkip}{6pt plus 1pt minus 1pt}
\newcommand{\LCSSFloatSkip}{5pt plus 1pt minus 1pt}
\newcommand{\LCSSCaptionSkip}{2pt}
\newcommand{\LCSSSectionBefore}{1ex plus 0.5ex minus 0.2ex}
\newcommand{\LCSSSectionAfter}{0.5ex plus 0.2ex minus 0ex}
\newcommand{\LCSSSubsectionBefore}{0.9ex plus 0.4ex minus 0.2ex}
\newcommand{\LCSSSubsectionAfter}{0.45ex plus 0.2ex minus 0ex}

\makeatletter
\def\section{\@startsection{section}{1}{\z@}%
{\LCSSSectionBefore}{\LCSSSectionAfter}%
{\normalfont\normalsize\centering\scshape}}
\def\subsection{\@startsection{subsection}{2}{\z@}%
{\LCSSSubsectionBefore}{\LCSSSubsectionAfter}%
{\normalfont\normalsize\itshape}}
\makeatother

\newcounter{assumption}

\newenvironment{assumption}{
    \refstepcounter{assumption}
    \par\smallskip
    \noindent\textbf{Assumption~\theassumption.}
    \normalfont
}{
}

\title{\LARGE \bf
Taylor-Informed Indirect Adaptive Predictive Control \\ Using Jacobian-Frozen Affine Predictors
}

\author{Tam W. Nguyen$^{1}$
\thanks{$^{1}$Tam W. Nguyen is with the Department of Electrical Engineering, Kyoto University, Kyoto 615-8510, Japan {\tt\small nguyen.tamwilly.3e@kyoto-u.ac.jp}}%
}

\newtheorem{corollary}{Corollary}

\begin{document}

\maketitle
\thispagestyle{empty}
\pagestyle{empty}

\begin{abstract}
This paper develops a Taylor-informed indirect adaptive predictive control
framework for nonlinear sampled-data systems using Jacobian-frozen affine
predictors. A finite Taylor expansion approximates the sampled nonlinear
dynamics, and recursive least squares (RLS) identifies its polynomial
coefficients online. At each sampling instant, the Jacobian of the identified
map is evaluated at the current operating point and frozen over the prediction
horizon, yielding an affine predictor for model predictive control. In
contrast to generic nonlinear feature dictionaries, the implemented
polynomial dictionary is a forward-Euler/Taylor-structure-informed reduced
dictionary. Exact joint-odd symmetry eliminates even-total-degree monomials,
whereas additional forward-Euler-informed pruning constitutes a deliberate
model reduction. Numerical simulations on an unstable nonlinear benchmark
compare different Taylor degrees. The results show that higher-order models
improve tracking accuracy as the operating point moves farther from the
expansion point while maintaining comparable control effort. The complete
MATLAB implementation is publicly available to facilitate reproducibility.
\end{abstract}

\section{Introduction}

Indirect adaptive predictive control combines online identification with
model predictive control (MPC), adapting to streaming input--output data
while retaining the structure of linear predictive control
\cite{clarke1987generalized,clarke1987generalized2,rawlings2020model}. At each
sampling instant, the predictive model is updated recursively to compute a
finite-horizon control sequence. Recent developments have extended this
framework to aerospace and robotic applications
\cite{doi:10.2514/1.G008859,nguyen2026fast,doi:10.2514/1.G009500}.

Adaptive predictive control for nonlinear systems remains an active research
topic. Existing approaches include nonlinear model predictive control,
lifting based on observable dictionaries, and adaptive predictive methods
with nonlinear regressors
\cite{allgower2012nonlinear,kamaldar2026iterative,TAO2026113142,doi:10.2514/1.J066619,CASTROVIEJOFERNANDEZ2026115444}. More recently, nonlinear
extensions of the predictive cost adaptive control (PCAC) framework
\cite{9612636} have used kernel, polynomial, Fourier, and spline feature
dictionaries
\cite{nguyen2026adaptivebehavioralpredictivecontrol,
alhazmi2026nonlinearpredictivecostadaptive}.
These methods replace the linear regressor with a richer feature dictionary
whose coefficients are estimated through recursive least squares (RLS).
Polynomial, kernel, and other nonlinear observable representations can thus
increase expressiveness while preserving parameter linearity
\cite{engel2004kernel,liu2009extended,rosenfeld2024occupation}. Although these dictionaries often
improve approximation capability, they are often drawn from general basis
families whose connection to the underlying nonlinear plant is indirect.

This work takes a different approach. Instead of introducing a generic
nonlinear dictionary, we construct a reduced feature set from the Taylor
structure of the sampled-data dynamics and structural information from a
forward-Euler expansion. The resulting monomials have a clear analytical
interpretation. Exact symmetries of the sampled flow justify some
eliminations, while further omissions are imposed deliberately as
forward-Euler-informed model reduction. This additional reduction does not
imply that every omitted monomial vanishes in the sampled flow.

An exact nonlinear basis can be used directly when it is known and trusted.
In many applications, however, only partial structural knowledge is
available. A Taylor dictionary then provides a common local representation
without committing to a particular functional basis. Increasing its degree
accommodates richer local nonlinear behavior through coefficient adaptation
rather than regressor redesign.

After RLS identifies the polynomial coefficients online, the sampled
nonlinear map is converted into a locally affine predictive model. Its
Jacobian is evaluated at the current operating point, and the resulting
affine coefficients are frozen over the prediction horizon. This construction
avoids repeated nonlinear propagation while retaining online adaptation
through the identified Taylor coefficients. The predictor is then used in a
conventional MPC formulation.

Through numerical simulation, this paper investigates how the Taylor degree
affects closed-loop tracking. Structured Taylor dictionaries of different
orders are compared on a nonlinear benchmark to evaluate the trade-off
between model complexity and control performance. To facilitate
reproducibility, the complete MATLAB implementation used throughout this
study is publicly available at
\url{https://github.com/tamwng/taylor-informed-iapc}.

\section{System Description}

Consider the continuous-time nonlinear system
\begin{align}
\dot{x}=f(x,u),
\label{eq:ctsystem}
\end{align}
where $f:\mathbb{R}^n\times\mathbb{R}^m\to\mathbb{R}^n$,
$x\in\mathbb{R}^{n}$ is the state, and $u\in\mathbb{R}^{m}$ is the
control input.

Under zero-order hold with sampling period $T_s>0$,
\begin{align*}
u(t)=u_k,
\qquad
t\in[kT_s,(k+1)T_s),
\end{align*}
where $t\in\mathbb{R}_{\ge0}$ denotes continuous time and
$k\in\mathbb N_0$ is the sampling index. The exact sampled-data dynamics are
\begin{align}
x_{k+1}=F(x_k,u_k),
\label{eq:sampledmap}
\end{align}
where $F:\mathbb{R}^n\times\mathbb{R}^m\to\mathbb{R}^n$ denotes the
sampled-data flow map associated with \eqref{eq:ctsystem}.

Select an equilibrium pair $(x_{\rm e},u_{\rm e})$ satisfying
\begin{align}
f(x_{\rm e},u_{\rm e})=0.
\label{eq:equilibrium}
\end{align}
Then,
\begin{align*}
F(x_{\rm e},u_{\rm e})=x_{\rm e}.
\end{align*}
Define the shifted variables
\begin{align*}
\tilde{x} \coloneqq x-x_{\rm e}, \qquad
\tilde{u} \coloneqq u-u_{\rm e}.
\end{align*}
The shifted sampled-data dynamics are
\begin{align}
\tilde{x}_{k+1}
=
\bar{F}(\tilde{x}_k,\tilde{u}_k),
\label{eq:deviationmap}
\end{align}
where
\begin{align}
\bar{F}(\tilde{x},\tilde{u})
\coloneqq
F(x_{\rm e}+\tilde{x},
u_{\rm e}+\tilde{u})
-
x_{\rm e}.
\label{eq:shiftedmap}
\end{align}
By construction,
\begin{align}
\bar{F}(0,0)=0.
\label{eq:Fbar_equilibrium}
\end{align}

The following assumptions are used throughout this paper.

\begin{assumption}
The state $x_k$ is available for feedback at each time $k$.
\end{assumption}

\begin{assumption}
$\bar F$ is well-defined on a neighborhood
$\mathcal N\subset\mathbb R^{n+m}$ of the origin.
\end{assumption}

\begin{assumption}\label{ass:map_class}
$\bar F:\mathcal N\rightarrow\mathbb R^n$
is of class $\mathcal{C}^{D+1}$ on $\mathcal N$.
\end{assumption}

\section{Taylor Approximation of the Sampled Dynamics}

Under Assumption~\ref{ass:map_class}, the map
$\bar F:\mathcal N\rightarrow\mathbb R^n$ admits a finite Taylor
approximation about the origin.

Let
\begin{align}
p &\coloneqq n+m,
&
z
&\coloneqq
\begin{bmatrix}
\tilde x\\
\tilde u
\end{bmatrix}
\in\mathbb R^p.
\label{eq:z}
\end{align}
Since $\bar F(0)=0$, the order-$D$ Taylor approximation is
\begin{align}
\bar F(z)
=
Mz
+
\sum_{d=2}^{D}F_d(z)
+
r_D(z),
\label{eq:taylor_structure}
\end{align}
where
\begin{align}
M
\coloneqq
\left.
\frac{\partial \bar F}{\partial z}
\right|_{z=0}
\in\mathbb R^{n\times p},
\label{eq:M_linear}
\end{align}
and $r_D:\mathcal N\rightarrow\mathbb R^n$ is the Taylor remainder. For
each $d\in\{2,\ldots,D\}$, $F_d:\mathbb R^p\rightarrow\mathbb R^n$
collects the homogeneous polynomial terms of total degree $d$. Thus, $F_2$
contains all quadratic terms, $F_3$ contains all cubic terms, and so forth.
For $D=1$, the sum in \eqref{eq:taylor_structure} is empty.

The local Taylor approximation in \eqref{eq:taylor_structure} is a
polynomial in the components of $z$. Define the full Taylor index set
\begin{align}
\mathcal A_D
\coloneqq
\left\{
\alpha\in\mathbb N_0^p:
1\leq |\alpha|\leq D
\right\},
\qquad
|\alpha|
\coloneqq
\sum_{j=1}^{p}\alpha_j.
\label{eq:taylor_index_set}
\end{align}
For each $\alpha\in\mathcal A_D$, define the monomial
\begin{align}
z^\alpha
\coloneqq
\prod_{j=1}^{p} z_j^{\alpha_j},
\label{eq:monomial}
\end{align}
where $z_j$ is the $j$-th component of $z$.

Let
$\mathcal A_D=\{\alpha_1,\ldots,\alpha_q\}$.
The polynomial regressor is
\begin{align}
\phi_D(z)
\coloneqq
\begin{bmatrix}
z^{\alpha_1}
&
\cdots
&
z^{\alpha_q}
\end{bmatrix}^{\top}
\in\mathbb R^q.
\label{eq:taylor_regressor}
\end{align}
The regressor $\phi_D$ contains every pure and crossed monomial of total
degree from $1$ through $D$. Therefore,
\begin{align}
q
=
\binom{p+D}{D}-1.
\label{eq:num_monomials}
\end{align}

The order-$D$ Taylor model can therefore be written compactly as
\begin{align}
\bar F(z)
=
\Theta\phi_D(z)
+
r_D(z),
\label{eq:taylor_model}
\end{align}
where $\Theta\in\mathbb R^{n\times q}$ contains the Taylor coefficients.

\section{Recursive Taylor Coefficient Identification}

The Taylor model \eqref{eq:taylor_model} is linear in the coefficient
matrix $\Theta$. After neglecting the Taylor remainder, identification of
the sampled dynamics reduces to estimation of $\Theta$.

For each $k\geq0$, define
\begin{align}
z_k
\coloneqq
\begin{bmatrix}
\tilde x_k\\
\tilde u_k
\end{bmatrix},
\qquad
\phi_k
\coloneqq
\phi_D(z_k).
\label{eq:rls_phi_k}
\end{align}
The a priori one-step prediction of $\tilde x_{k+1}$ is
\begin{align}
\hat{\tilde x}_{k+1}
=
\hat\Theta_k\phi_k.
\label{eq:taylor_prediction}
\end{align}

Vectorize the coefficient matrix as
\begin{align}
\theta_k
\coloneqq
\operatorname{vec}(\hat\Theta_k)
\in\mathbb R^{nq},
\label{eq:theta_vec}
\end{align}
and define the regression matrix
\begin{align}
\varphi_k
\coloneqq
\phi_k^\top\otimes I_n
\in\mathbb R^{n\times nq}.
\label{eq:rls_regressor}
\end{align}
Then \eqref{eq:taylor_prediction} becomes
\begin{align}
\hat{\tilde x}_{k+1}
=
\varphi_k\theta_k.
\label{eq:vectorized_prediction}
\end{align}
The prediction error is
\begin{align}
e_k
\coloneqq
\tilde x_{k+1}
-
\varphi_k\theta_k.
\label{eq:rls_error}
\end{align}

Using standard RLS with forgetting factor $\lambda\in(0,1]$ and
$P_0\succ0$, define
\begin{align}
L_k
&=
\lambda^{-1}P_k,
\label{eq:rls_L}
\\
P_{k+1}
&=
L_k
-
L_k\varphi_k^\top
\left(
I_n+\varphi_kL_k\varphi_k^\top
\right)^{-1}
\varphi_kL_k,
\label{eq:rls_covariance}
\\
\theta_{k+1}
&=
\theta_k
+
P_{k+1}\varphi_k^\top e_k.
\label{eq:rls_parameter}
\end{align}

Processing the measured transition $(z_k,\tilde x_{k+1})$ yields the
posterior RLS estimate $(P_{k+1},\theta_{k+1})$. In the numerical study, these
one-step transitions come from the zero-order-hold RK4 simulation of the
continuous-time plant; forward Euler does not propagate the simulated plant.
Once $\tilde x_{k+1}$ is measured, $\theta_{k+1}$ is available for the
subsequent control synthesis.

Using this posterior estimate, reshape $\theta_{k+1}$ into
$\hat\Theta_{k+1}\in\mathbb R^{n\times q}$ and define the identified
sampled Taylor map
\begin{align}
\hat F_{k+1}(z)
\coloneqq
\hat\Theta_{k+1}\phi_D(z).
\label{eq:identified_taylor_map}
\end{align}
Thus, $\hat F_{k+1}$ follows from \eqref{eq:taylor_model} by neglecting the
Taylor remainder and replacing the unknown coefficient matrix $\Theta$ with
its RLS estimate $\hat\Theta_{k+1}$.

The map $\hat F_{k+1}$ is nonlinear in $z$ but linear in the identified
coefficients, which enables direct RLS identification. Other RLS variants,
including variable-rate forgetting or covariance reset, can be used without
changing the Jacobian-frozen construction below.

In the sequel, $\hat F_k$ denotes the most recent identified map
available at the current control instant.

\section{Jacobian-Frozen Affine Predictor}

Because the identified Taylor map $\hat F_k$ is nonlinear in the shifted
state and input, its exact finite-horizon propagation would lead to nonlinear
predictive control. We instead construct a local affine predictor by freezing
the Jacobian of $\hat F_k$ at the current operating point.

\subsection{Operating Point}

At step $k$, the shifted state $\tilde x_k$ is measured, $\tilde u_k$ is
currently applied, and the control computation produces $\tilde u_{k+1}$.
Accordingly, the operating point is
\begin{align}
\bar z_k
\coloneqq
z_k
=
\begin{bmatrix}
\tilde x_k\\
\tilde u_k
\end{bmatrix}
\in\mathbb R^{n+m}.
\label{eq:operating_point}
\end{align}
The initial input $u_0$ is prescribed.

The Taylor approximation remains centered at the equilibrium $z=0$; the
operating point $\bar z_k$ is used only to evaluate the Jacobian.

\subsection{Jacobian Evaluation}

Compute the Jacobian of the identified Taylor map at $\bar z_k$:
\begin{align}
J_k
\coloneqq
\left.
\frac{\partial \hat F_k}{\partial z}
\right|_{z=\bar z_k}
\in\mathbb R^{n\times(n+m)}.
\label{eq:jacobian_full}
\end{align}
Partition
\begin{align}
J_k
=
\begin{bmatrix}
A_k & B_k
\end{bmatrix},
\label{eq:jacobian_partition}
\end{align}
where
$A_k\in\mathbb R^{n\times n}$ and
$B_k\in\mathbb R^{n\times m}$ are the Jacobians of
$\hat F_k$ with respect to the shifted state and input,
respectively.

The first-order local approximation of $\hat F_k$ at $\bar z_k$ is
\begin{align}
\hat F_k(z)
&\approx
\hat F_k(\bar z_k)
+
J_k(z-\bar z_k).
\label{eq:local_linearization}
\end{align}
Substituting
$z=[\tilde x^\top\ \tilde u^\top]^\top$
and partitioning $J_k$ according to
\eqref{eq:jacobian_partition} yields
\begin{align}
\hat F_k(z)
&\approx
c_k+A_k\tilde x+B_k\tilde u,
\label{eq:affine_predictor}
\end{align}
where
\begin{align}
c_k
\coloneqq
\hat F_k(\bar z_k)
-
A_k\tilde x_k
-
B_k\tilde u_k.
\label{eq:predictor_offset}
\end{align}

The resulting Jacobian-frozen predictor is
\begin{align}
\tilde x_{i+1|k}
=
c_k
+
A_k\tilde x_{i|k}
+
B_k\tilde u_{i|k},
\quad
i=0,\ldots,N-1.
\label{eq:jf_predictor}
\end{align}
The matrices $A_k$, $B_k$, and $c_k$ remain fixed over the prediction
horizon.

\subsection{Reduction to Linear PCAC}

Linear PCAC is recovered as the first-order special case of the proposed
scheme.

\begin{corollary}
If $D=1$, then the Jacobian-frozen predictor
\eqref{eq:jf_predictor} reduces to the linear PCAC propagation model.
\end{corollary}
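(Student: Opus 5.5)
I will argue directly from the definitions in \eqref{eq:taylor_index_set}--\eqref{eq:predictor_offset}. The plan is to show that for $D=1$ the identified map is linear, so the Jacobian, and hence the predictor, does not depend on the operating point. The first step is to specialize the index set. For $D=1$, $\mathcal A_1=\{\alpha\in\mathbb N_0^p: |\alpha|=1\}$ consists exactly of the $p$ unit multi-indices. By \eqref{eq:num_monomials}, $q=\binom{p+1}{1}-1=p$. Ordering the unit indices in the natural way, \eqref{eq:monomial} and \eqref{eq:taylor_regressor} give $\phi_1(z)=z$. Any other ordering only permutes the columns of $\hat\Theta_k$, which does not affect the argument.

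Next, the identified map \eqref{eq:identified_taylor_map} becomes $\hat F_k(z)=\hat\Theta_k z$. This is linear in $z$, so its Jacobian \eqref{eq:jacobian_full} equals $\hat\Theta_k$ at every $\bar z_k$. Partitioning according to \eqref{eq:jacobian_partition} gives $\hat\Theta_k=[A_k\ B_k]$, where $A_k$ and $B_k$ are the RLS estimates of the state and input blocks of $M$ in \eqref{eq:M_linear}.

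The offset \eqref{eq:predictor_offset} then evaluates to
\begin{align*}
c_k=\hat\Theta_k\bar z_k-A_k\tilde x_k-B_k\tilde u_k=A_k\tilde x_k+B_k\tilde u_k-A_k\tilde x_k-B_k\tilde u_k=0 .
\end{align*}
With $c_k=0$, \eqref{eq:jf_predictor} reduces to $\tilde x_{i+1|k}=A_k\tilde x_{i|k}+B_k\tilde u_{i|k}$. The linearization \eqref{eq:local_linearization} is also exact in this case, so freezing the Jacobian introduces no approximation beyond identification error.

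The last step, which I expect to be the only real obstacle, is identifying this model with linear PCAC. The paper has not formally defined linear PCAC, so I will take it to mean a linear state-space model $x_{k+1}=Ax_k+Bu_k$ whose matrices are estimated online by RLS with regressor $\varphi_k=z_k^\top\otimes I_n$ and updates \eqref{eq:rls_L}--\eqref{eq:rls_parameter}, then propagated over the horizon with the current estimate frozen. With $\phi_1(z_k)=z_k$, the RLS recursion of the proposed scheme is exactly this linear RLS. Both the identified parameters and the propagation model therefore coincide, in the shifted coordinates $(\tilde x,\tilde u)$ about the chosen equilibrium. I will state this coordinate convention explicitly in the proof.
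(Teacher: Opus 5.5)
Your proposal is correct and follows the paper's proof essentially step for step: $\phi_1(z)=z$ gives $\hat F_k(z)=\hat\Theta_k z$, so $J_k=\hat\Theta_k=[A_k\ B_k]$ is independent of $\bar z_k$ and $c_k=0$, leaving $\tilde x_{i+1|k}=A_k\tilde x_{i|k}+B_k\tilde u_{i|k}$. Your extra remarks on monomial ordering and on the RLS recursion reducing to linear RLS go slightly beyond what the paper states, but they are consistent with it.
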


\begin{proof}
For $D=1$, the Taylor regressor satisfies $\phi_1(z)=z$. Hence,
\begin{align}
\hat F_k(z)
=
\hat\Theta_k z,
\end{align}
where $\hat\Theta_k\in\mathbb R^{n\times(n+m)}$. Therefore,
\begin{align}
J_k
=
\hat\Theta_k
=
\begin{bmatrix}
A_k & B_k
\end{bmatrix}
\end{align}
is independent of the operating point. Moreover,
\begin{align}
c_k
&=
\hat F_k(\bar z_k)-J_k\bar z_k
\\
&=
\hat\Theta_k\bar z_k-\hat\Theta_k\bar z_k
=
0.
\end{align}
Substituting $c_k=0$ into \eqref{eq:jf_predictor} yields
\begin{align}
\tilde x_{i+1|k}
=
A_k\tilde x_{i|k}
+
B_k\tilde u_{i|k},
\end{align}
which is the linear PCAC propagation model.
\end{proof}

\subsection{Elimination of Factor-Freezing Ambiguity}

Jacobian freezing avoids arbitrary choices when approximating mixed nonlinear
terms. For example, consider
\begin{align}
F(y,u)
=
\alpha y+\beta u+a u^2+b uy+c y^2.
\label{eq:quadratic_example}
\end{align}
At $(\bar y,\bar u)$, the Jacobian-frozen predictor is
\begin{align}
F(y,u)
\approx
C_{\rm loc}+A_{\rm loc} y+B_{\rm loc} u,
\end{align}
where
\begin{align}
A_{\rm loc}
&=
\alpha+b\bar u+2c\bar y,
\\
B_{\rm loc}
&=
\beta+2a\bar u+b\bar y,
\\
C_{\rm loc}
&=
F(\bar y,\bar u)-A_{\rm loc}\bar y-B_{\rm loc}\bar u.
\end{align}
Thus, the cross term satisfies
\begin{align}
uy
\approx
\bar u\,y+\bar y\,u-\bar u\,\bar y.
\end{align}
This approximation matches both the value and the Jacobian at the operating
point without requiring a choice of whether to freeze $u$ or $y$.

\section{Predictive Control Synthesis}

Following the PCAC formulation \cite{9612636}, the Jacobian-frozen predictor
\eqref{eq:jf_predictor} is applied over a prediction horizon $N$.

At step $k$, initialize
\begin{align}
\tilde x_{0|k}
&=
\tilde x_k,
&
\tilde u_{0|k}
&=
\tilde u_k.
\label{eq:prediction_initialization}
\end{align}
The predicted states satisfy
\begin{align}
\tilde x_{i+1|k}
=
c_k
+
A_k\tilde x_{i|k}
+
B_k\tilde u_{i|k},
\quad
i=0,\ldots,N-1.
\label{eq:prediction_dynamics}
\end{align}
Here, $\tilde u_{0|k}$ is known, whereas
$\tilde u_{1|k},\ldots,\tilde u_{N-1|k}$ are decision variables. This
indexing follows the PCAC convention: $\tilde u_{1|k}$ is applied at step
$k+1$.

Let
\begin{align}
\Delta\tilde u_{i|k}
\coloneqq
\tilde u_{i|k}-\tilde u_{i-1|k},
\qquad
i=1,\ldots,N-1,
\label{eq:input_increment}
\end{align}
and let $\tilde r_{i|k}$ denote the reference in shifted coordinates.

Let the state and input constraint sets be the convex polyhedra
\begin{align}
\mathcal X
&=
\left\{
\tilde x\in\mathbb R^n:
H_x\tilde x\leq h_x
\right\},
\\
\mathcal U
&=
\left\{
\tilde u\in\mathbb R^m:
H_u\tilde u\leq h_u
\right\}.
\end{align}
To soften the state constraints, introduce
$\varepsilon_{i|k}\geq0$ and solve
\begin{align}
\underset{
\substack{
\tilde x_{1|k},\ldots,\tilde x_{N|k},\\
\tilde u_{1|k},\ldots,\tilde u_{N-1|k},\\
\varepsilon_{1|k},\ldots,\varepsilon_{N|k}
}}
{\operatorname{minimize}}
\quad
&
\frac{1}{2}
\sum_{i=1}^{N}
\left\|
\tilde x_{i|k}-\tilde r_{i|k}
\right\|_{Q_i}^{2}
\notag\\
&+
\frac{1}{2}
\sum_{i=1}^{N-1}
\left\|
\Delta\tilde u_{i|k}
\right\|_{R}^{2}
+
\frac{1}{2}
\sum_{i=1}^{N}
\left\|
\varepsilon_{i|k}
\right\|_{S}^{2}
\label{eq:predictive_qp}
\\
\operatorname{subject\ to}
\quad
&
\eqref{eq:prediction_dynamics},
\notag\\
&
H_x\tilde x_{i|k}
\leq
h_x+\varepsilon_{i|k},
\qquad
i=1,\ldots,N,
\notag\\
&
H_u\tilde u_{i|k}
\leq
h_u,
\qquad
i=1,\ldots,N-1,
\notag\\
&
\varepsilon_{i|k}\geq0,
\qquad
i=1,\ldots,N,
\notag
\end{align}
where
$Q_i\succeq0$,
$R\succ0$,
$S\succ0$, and
$\|v\|_Q^2\coloneqq v^\top Qv$.

Because $A_k$, $B_k$, and $c_k$ are fixed over the horizon,
\eqref{eq:predictive_qp} is a convex quadratic program. The slack variables
prevent infeasibility caused solely by the state constraints but permit
violation of the nominal state constraint. No recursive-feasibility or
closed-loop stability guarantee is asserted because the identified model and
its Jacobian-frozen predictor vary online.

Let $\tilde u_{1|k}^{\star}$ denote the first optimized input. The computed
control is
\begin{align}
u_{k+1}
=
u_{\rm e}
+
\tilde u_{1|k}^{\star}.
\label{eq:applied_control}
\end{align}
Thus, $u_k$ generates the measured transition from $k$ to $k+1$, while the
newly computed $u_{k+1}$ is held over the subsequent sampling interval.

\section{Case Study: Unstable Trigonometric Plant}
\label{sec:case_study}

\subsection{Plant and Sampled Dynamics}
\label{subsec:case_plant}

Consider the continuous-time scalar system
\begin{align}
    \dot{x}
    =
    f(x,u)
    :=
    a\sin x\cos x
    +
    b\cos^2 x\sin u,
    \label{eq:case_plant}
\end{align}
where $a>0$ and $b>0$. We select $(x_{\rm e},u_{\rm e})=(0,0)$ as the
equilibrium. Hence, $\tilde{x}=x$ and $\tilde{u}=u$, and tildes are omitted
throughout this section. Near the origin,
\begin{align}
    \dot{x}
    =
    ax+bu+\mathcal{O}\left(\|(x,u)\|^3\right),
    \label{eq:case_local_dynamics}
\end{align}
so the equilibrium is open-loop unstable and its linearization is
controllable.

Under zero-order hold with sampling period $T_s=0.05~{\rm s}$, let $F$
denote the exact sampled-data flow map. In the numerical study, an RK4 step
propagates the continuous-time plant over each sampling interval under a
constant input. The RLS estimator identifies the resulting one-step
transitions; forward Euler does not propagate the simulated plant.

For analytical reference only, applying a forward-Euler step to
\eqref{eq:case_plant} gives
\begin{align}
    x_{k+1}
    &=
    F_{\rm FE}(x_k,u_k)
    \nonumber\\
    &:=
    x_k
    +
    T_s\left(
        a\sin x_k\cos x_k
        +
        b\cos^2 x_k\sin u_k
    \right).
    \label{eq:case_forward_euler_reference}
\end{align}
Its linear expansion near the origin is
\begin{align}
    x_{k+1}
    =
    (1+aT_s)x_k+bT_su_k
    +
    \mathcal{O}\left(\|(x_k,u_k)\|^3\right),
    \label{eq:case_discrete_linearization}
\end{align}
whose open-loop eigenvalue is $1+aT_s>1$.

Since
\begin{align}
    f(-x,-u)=-f(x,u),
\end{align}
uniqueness of solutions implies the exact joint-odd symmetry of the sampled
flow,
\begin{align*}
    F(-x,-u)
    =
    -F(x,u).
\end{align*}
The identity also holds exactly for the implemented RK4 map because every
stage preserves the joint sign reversal. Consequently, neither the sampled
flow nor the RK4 map has even-total-degree terms in its Taylor expansion about
the origin.

\subsection{Taylor Structure and Coefficient Pruning}
\label{subsec:case_taylor_structure}

Let
\begin{align}
    z=
    \begin{bmatrix}
        x & u
    \end{bmatrix}^\top .
\end{align}
As an analytical reference, expanding the forward-Euler expression
\eqref{eq:case_forward_euler_reference} about the origin through degree five
yields
\begin{align}
    F_{\rm FE}(x,u)
    ={}&
    (1+aT_s)x+bT_su
    \nonumber\\
    &-\frac{2aT_s}{3}x^3
    -bT_sx^2u
    -\frac{bT_s}{6}u^3
    \nonumber\\
    &+\frac{2aT_s}{15}x^5
    +\frac{bT_s}{3}x^4u
    +\frac{bT_s}{6}x^2u^3
    +\frac{bT_s}{120}u^5
    \nonumber\\
    &+\mathcal{O}\left(\|z\|^7\right).
    \label{eq:case_fifth_order_expansion}
\end{align}

Joint-odd symmetry removes every even-total-degree term exactly. The
implemented dictionary then applies a deliberate additional model reduction
informed by the forward-Euler structure. For a degree-$D$ truncation, it
retains only monomials of the form
\begin{align}
    x^{2i+1},
    \qquad
    x^{2i}u^{2j+1},
    \qquad
    i,j\in\mathbb{N}_0,
\end{align}
with total degree at most $D$. This additional pruning does not imply that
every omitted odd-total-degree monomial vanishes in the RK4 sampled flow.
The resulting feature set is therefore called a
forward-Euler/Taylor-structure-informed reduced dictionary.

The resulting regressors are
\begin{align*}
    \varphi^{\rm FE}_1
    &=
    \begin{bmatrix}
        x &
        u
    \end{bmatrix}^\top,
    \\
    \varphi^{\rm FE}_3
    &=
    \begin{bmatrix}
        x &
        u &
        x^3 &
        x^2u &
        u^3
    \end{bmatrix}^\top,
    \\
    \varphi^{\rm FE}_5
    &=
    \begin{bmatrix}
        x &
        u &
        x^3 &
        x^2u &
        u^3 &
        x^5 &
        x^4u &
        x^2u^3 &
        u^5
    \end{bmatrix}^\top,
    \\
    \varphi^{\rm FE}_7
    &=
    \begin{bmatrix}
        \left(\varphi^{\rm FE}_5\right)^\top &
        x^7 &
        x^6u &
        x^4u^3 &
        x^2u^5 &
        u^7
    \end{bmatrix}^\top .
\end{align*}

Table~\ref{tab:case_coefficient_counts} compares the coefficient counts of
the full Taylor dictionary and its implemented reduction.
\begin{table}[t]
    \centering
    \caption{Number of estimated coefficients in the full and implemented
    reduced dictionaries.}
    \label{tab:case_coefficient_counts}
    \begin{tabular}{c|c|c}
        \hline
        Degree $D$
        &
        Full dictionary
        &
        Reduced dictionary
        \\
        \hline
        1 & 2  & 2  \\
        3 & 9  & 5  \\
        5 & 20 & 9  \\
        7 & 35 & 14 \\
        \hline
    \end{tabular}
\end{table}

For analytical reference, \eqref{eq:case_fifth_order_expansion} gives the
coefficient vector associated with $\varphi^{\rm FE}_5$ as
\begin{align}
    \theta_5^\star
    =
    \left[
    \begin{smallmatrix}
        1+aT_s\\
        bT_s\\
        -2aT_s/3\\
        -bT_s\\
        -bT_s/6\\
        2aT_s/15\\
        bT_s/3\\
        bT_s/6\\
        bT_s/120
    \end{smallmatrix}
    \right].
    \label{eq:case_true_theta5}
\end{align}

The forward-Euler-informed model reduction therefore decreases the number
of estimated coefficients from $20$ to $9$ for $D=5$ and from $35$ to $14$
for $D=7$. The vector $\theta_5^\star$ is only an analytical forward-Euler
reference: the controller neither uses it nor compares it numerically with
the identified RK4-transition coefficients. Instead, RLS estimates the
retained coefficients online.

No constant monomial is estimated. The affine term arises subsequently from
Jacobian freezing.

\subsection{Identified Map and Jacobian-Frozen Predictor}
\label{subsec:case_compilation}

For each Taylor degree $D\in\{1,3,5,7\}$, let $\varphi_D^{\rm FE}$ denote
the implemented reduced regressor defined by the preceding monomial pattern.
The identified sampled map is
\begin{align}
    \widehat{F}_{D,k}(x,u)
    =
    \widehat{\theta}_{D,k}^\top
    \varphi_D^{\rm FE}(x,u),
    \label{eq:case_identified_map}
\end{align}
where $\widehat{\theta}_{D,k}$ is updated online using RLS.

At each step $k$, the Jacobian-frozen coefficients are evaluated at the
current operating point $(x_k,u_k)$:
\begin{align}
    A_k
    &=
    \left.
    \frac{\partial \widehat{F}_{D,k}}{\partial x}
    \right|_{(x_k,u_k)},
    \label{eq:case_Ak_general}
    \\
    B_k
    &=
    \left.
    \frac{\partial \widehat{F}_{D,k}}{\partial u}
    \right|_{(x_k,u_k)},
    \label{eq:case_Bk_general}
    \\
    c_k
    &=
    \widehat{F}_{D,k}(x_k,u_k)
    -A_kx_k-B_ku_k.
    \label{eq:case_ck_general}
\end{align}

For the fifth-order reduced regressor $\varphi_5^{\rm FE}$, let
$\widehat{\theta}_{ij,k}$ denote the estimated coefficient multiplying
$x^iu^j$. The resulting Jacobian coefficients are
\begin{align}
    A_k
    ={}&
    \widehat{\theta}_{10,k}
    +3\widehat{\theta}_{30,k}x_k^2
    +2\widehat{\theta}_{21,k}x_ku_k
    \nonumber\\
    &+
    5\widehat{\theta}_{50,k}x_k^4
    +4\widehat{\theta}_{41,k}x_k^3u_k
    +2\widehat{\theta}_{23,k}x_ku_k^3,
    \label{eq:case_Ak_D5}
    \\
    B_k
    ={}&
    \widehat{\theta}_{01,k}
    +\widehat{\theta}_{21,k}x_k^2
    +3\widehat{\theta}_{03,k}u_k^2
    \nonumber\\
    &+
    \widehat{\theta}_{41,k}x_k^4
    +3\widehat{\theta}_{23,k}x_k^2u_k^2
    +5\widehat{\theta}_{05,k}u_k^4.
    \label{eq:case_Bk_D5}
\end{align}

The resulting Jacobian-frozen predictor is
\begin{align}
    x_{i+1|k}
    =
    c_k+A_kx_{i|k}+B_ku_{i|k},
    \quad
    i=0,\ldots,N-1.
    \label{eq:case_frozen_predictor}
\end{align}
For analytical reference only, the Jacobian of the forward-Euler expression
\eqref{eq:case_forward_euler_reference} is
\begin{align}
    A_{\rm FE}(x,u)
    &=
    1
    +
    T_s
    \left[
        a\cos(2x)
        -
        b\sin(2x)\sin u
    \right],
    \label{eq:case_true_A}
    \\
    B_{\rm FE}(x,u)
    &=
    T_sb\cos^2x\cos u.
    \label{eq:case_true_B}
\end{align}

\section{Numerical Evaluation}
\label{sec:numerical_evaluation}

This section evaluates how the Taylor degree affects closed-loop tracking.
The continuous-time plant \eqref{eq:case_plant} is simulated under zero-order
hold using fourth-order Runge--Kutta (RK4) with $T_s=0.05~{\rm s}$ over each
sampling interval. RLS identifies the one-step transitions generated by this
RK4 simulation. The controller combines the identified
forward-Euler/Taylor-structure-informed reduced map with the Jacobian-frozen
predictor developed above. All cases $D\in\{1,3,5,7\}$ use the same
initialization data, constraints, prediction horizon, and control weights.

The complete MATLAB implementation used throughout this study is publicly available at \url{https://github.com/tamwng/taylor-informed-iapc}.

\subsection{Simulation Setup}
\label{subsec:numerical_configuration}

The numerical configuration is summarized in
Table~\ref{tab:numerical_configuration}.
\begin{table}[t]
    \centering
    \caption{Numerical configuration.}
    \label{tab:numerical_configuration}
    \begin{tabular}{l|c}
        \hline
        Quantity & Value \\
        \hline
        Plant parameters
        & $a=0.5,\ b=1.5$ \\
        Sampling period
        & $T_s=0.05~{\rm s}$ \\
        Plant integration
        & RK4 with ZOH input \\
        Initial state
        & $x_0=0$ \\
        Constraints
        & soft $|x_k|\leq0.75$, hard $|u_k|\leq1$ \\
        Taylor degrees
        & $D\in\{1,3,5,7\}$ \\
        Prediction horizon
        & $N=8$ \\
        Forgetting factor
        & $\lambda=1$ \\
        Initial RLS estimate
        & $\widehat{\theta}_{D,0}=0_{q_D^{\rm FE}}$ \\
        Initial covariance
        & $P_0=10^5I_{q_D^{\rm FE}}$ \\
        State weights
        & $Q_i=1,\ i<N,\quad Q_N=10$ \\
        Input-increment weight
        & $R=5\times10^{-2}$ \\
        State-slack weight
        & $S=10^5$ \\
        Initialization length
        & $N_{\rm id}=60$ samples \\
        Initialization input
        & $u_k\in\{0,\pm0.15,\pm0.30,\pm0.45,\pm0.60\}$ \\
        Reference type
        & Amplitude-swept sinusoid \\
        Segment length
        & $K_r=300$ samples \\
        Sinusoidal frequency
        & $0.10~{\rm Hz}$ \\
        Amplitudes
        & $\{0.25,\ 0.55,\ 0.80,\ 0.90,\ 0\}$ \\
        \hline
    \end{tabular}
\end{table}
During the first $N_{\rm id}$ samples, a fixed multilevel input sequence is
applied while the RLS estimator is updated. These initialization levels
excite the odd monomials retained in the reduced dictionary. The predictive
controller is activated at $k=N_{\rm id}$. The state bound
$|x_k|\leq0.75$ is softened by the slack variables in
\eqref{eq:predictive_qp} and may therefore be violated; the input bound is
hard.

After the initialization phase, the reference is an amplitude-swept
sinusoid. For the $j$th segment, the reference is defined by
\begin{align*}
    r_k
    =
    A_j
    \sin\!\left(
        2\pi f
        (k-N_{\rm id})T_s
    \right),
    \label{eq:numerical_reference}
\end{align*}
where $f=0.10~{\rm Hz}$ and
\begin{align*}
A_j\in\{0.25,\ 0.55,\ 0.80,\ 0.90,\ 0\}.
\end{align*}
Each amplitude is maintained for $K_r=300$ samples before transitioning to
the next level. The increasing amplitudes progressively excite higher-order
nonlinearities. Because the state constraint is soft, the closed-loop state
can exceed its nominal bound.

The cases are compared through state and input trajectories, tracking errors,
and quantitative performance metrics computed over each amplitude segment.

\subsection{Closed-Loop Results}
\label{subsec:closed_loop_results}

Figures~\ref{fig:tracking}--\ref{fig:amplitude_error} summarize the
closed-loop responses produced by the proposed controller.
\begin{figure}
    \centering
    \includegraphics[width=\linewidth, trim={10 0 30 0},clip]{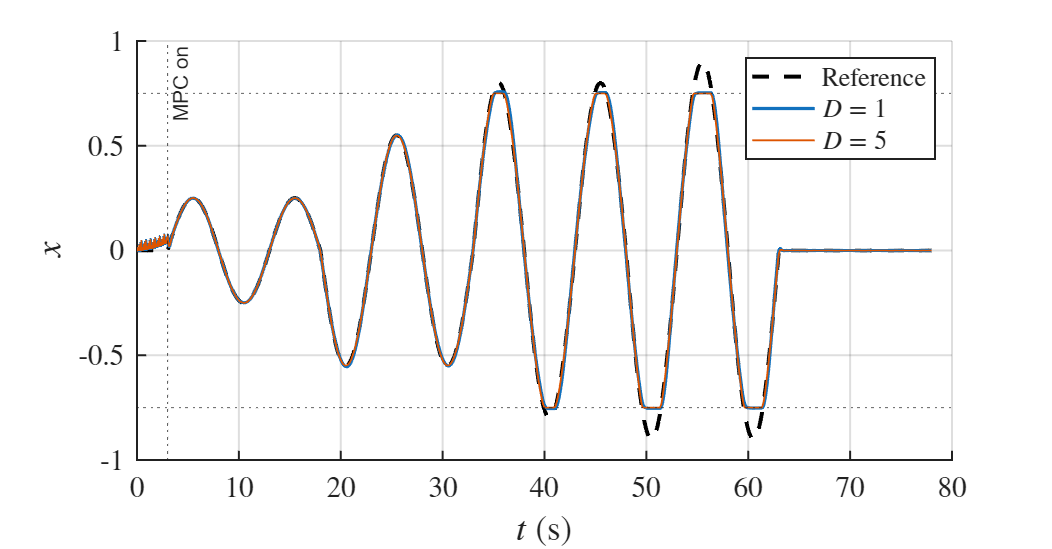}
    \caption{Closed-loop state tracking for the first- and fifth-order Taylor models. The horizontal dotted lines indicate the nominal bound $|x|\leq0.75$, which is softened with slack variables and may be violated.}
    \label{fig:tracking}
\end{figure}
\begin{figure}
    \centering
    \includegraphics[width=\linewidth, trim={0 0 0 0},clip]{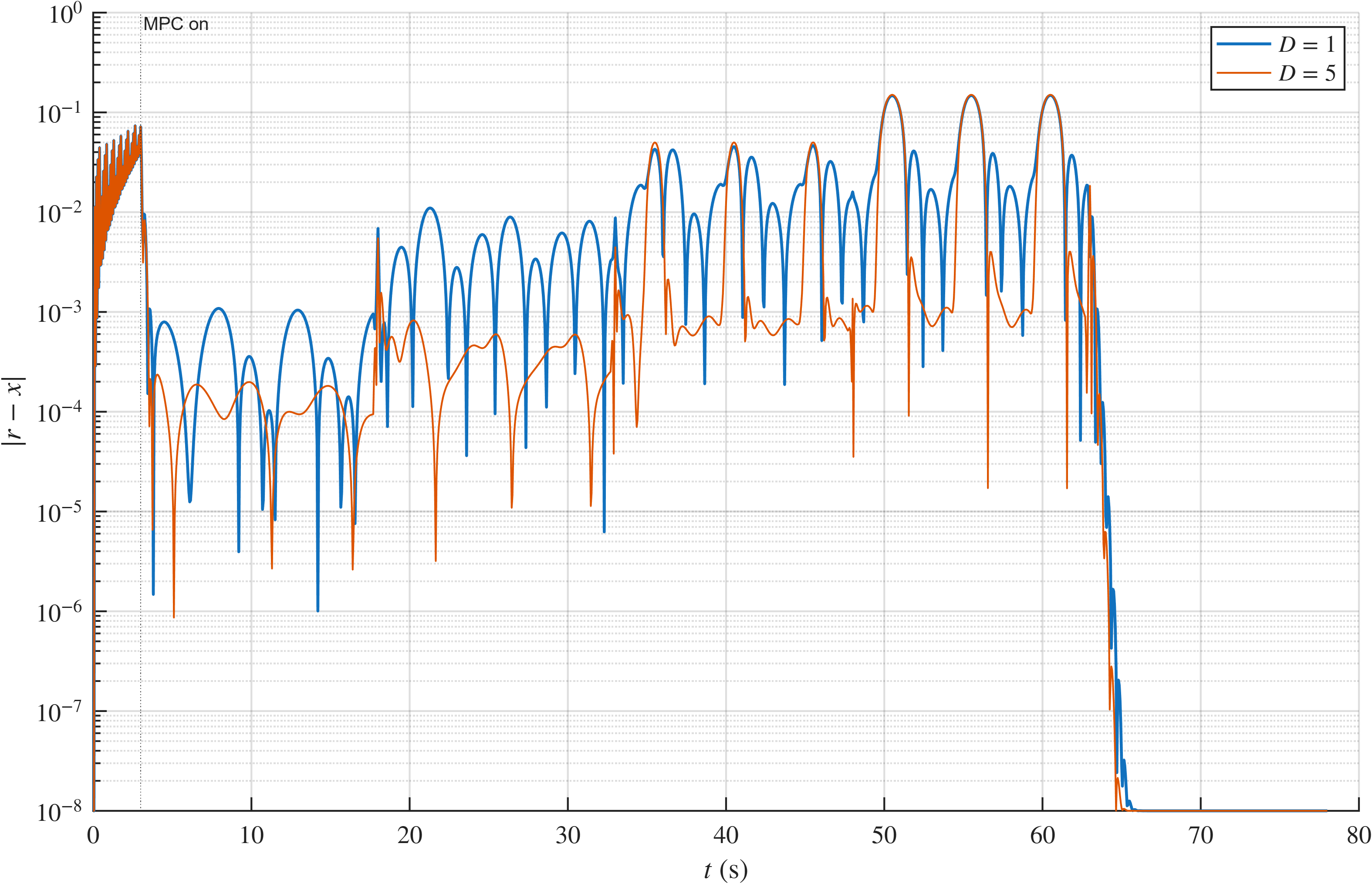}
    \caption{Absolute tracking error on a logarithmic vertical scale for the first- and fifth-order Taylor models. The plot includes the segment transients.}
    \label{fig:error_log}
\end{figure}
\begin{figure}
    \centering
    \includegraphics[width=\linewidth, trim={10 0 30 0},clip]{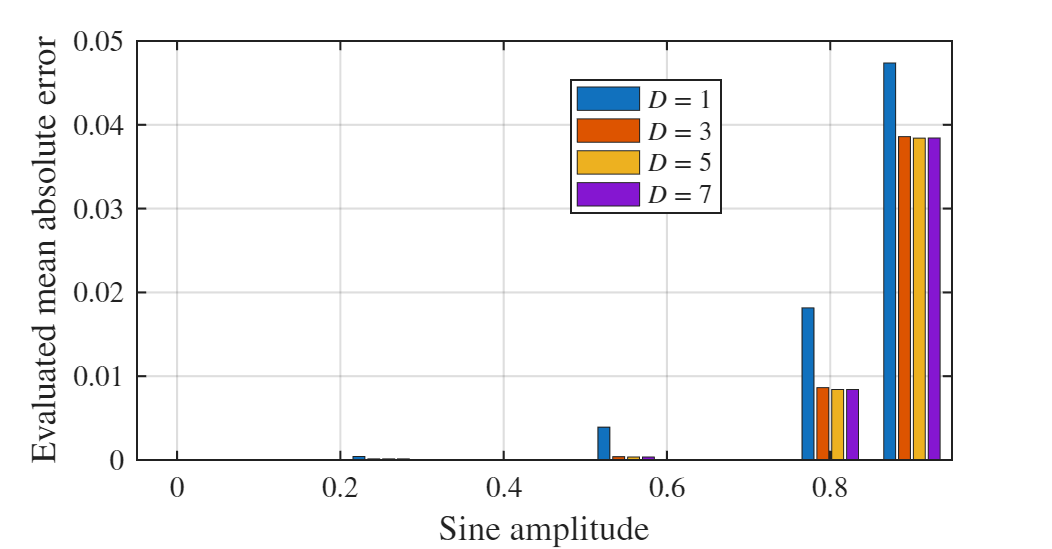}
    \caption{Evaluated post-transient mean absolute error (MAE) for each reference amplitude and Taylor degree. For every 300-sample amplitude segment, the first sinusoidal period (200 samples) is discarded and the MAE is evaluated over the remaining 100 samples.}
    \label{fig:amplitude_error}
\end{figure}
Figures~\ref{fig:tracking} and \ref{fig:error_log} show the state trajectories
and logarithmic tracking errors, respectively.

Each per-segment metric is
computed over its 300-sample half-open interval
$[\mathrm{segmentStart},\mathrm{segmentStop})$. For the evaluated
post-transient MAE in Figure~\ref{fig:amplitude_error}, the first 200 samples
are discarded and the remaining 100 are averaged. In contrast, the overall
RMSE and MAE in Table~\ref{tab:performance} use all stored samples
$k\geq N_{\mathrm{id}}$, including the final sample $k=K$.

RMSE and MAE measure overall and average tracking error, respectively. The
maximum state violation is the largest exceedance of the nominal soft state
bound; total input variation (I.V.) quantifies accumulated control variation.
\begin{table}[t]
\centering
\caption{Closed-loop overall performance for different Taylor degrees
$D$.}
\label{tab:performance}
\small
\setlength{\tabcolsep}{3pt}
\begin{tabular}{c|cccc}
\hline
$D$
& RMSE
& MAE
& \shortstack{Maximum state\\violation}
& \shortstack{Total input\\variation} \\
\hline
1 & 0.03199 & 0.01418 & $7.37\times10^{-3}$ & 12.41 \\
3 & 0.03092 & 0.00971 & $3.50\times10^{-4}$ & 11.75 \\
5 & 0.03090 & 0.00958 & $2.56\times10^{-4}$ & 11.86 \\
7 & 0.03089 & 0.00959 & $3.52\times10^{-4}$ & 11.81 \\
\hline
\end{tabular}
\end{table}

Figure~\ref{fig:tracking} compares the first- and fifth-order Taylor models.
Both responses track the reference, with visible clipping near the softened
state bound during the larger-amplitude segments. The nonzero violations in
Table~\ref{tab:performance} confirm that exact state-constraint satisfaction
is not claimed.

Figure~\ref{fig:error_log} presents the corresponding absolute tracking
errors, including the segment transients, on a logarithmic vertical scale.
This scale displays the small within-segment errors and the larger error
peaks in the same panel.

Figure~\ref{fig:amplitude_error} quantifies this trend through the evaluated
post-transient MAE for each sinusoidal amplitude after the 200-sample discard.
This metric is distinct from the overall MAE in
Table~\ref{tab:performance}. Differences among Taylor degrees are negligible
at small amplitudes. As the amplitude increases, the first-order model shows
consistently larger mean errors than the higher-order models, while the
third-, fifth-, and seventh-order models remain close to one another. This
behavior is consistent with higher-order nonlinear terms becoming more
influential as the system operates farther from the equilibrium about which
the Taylor expansion is constructed.

\section{Conclusion}

This paper presented a Taylor-informed indirect adaptive predictive control
framework for nonlinear systems based on Jacobian-frozen affine predictors.
In the numerical study, the continuous-time plant is simulated under
zero-order hold using RK4 with $T_s=0.05~{\rm s}$, and RLS identifies the
resulting one-step transitions. Online identification of the implemented
forward-Euler/Taylor-structure-informed reduced dictionary yields an adaptive
sampled nonlinear map. At each sampling instant, its Jacobian is evaluated at
the current operating point and frozen over the prediction horizon, producing
an affine predictor for finite-horizon model predictive control.

The results show lower overall MAE for $D\in\{3,5,7\}$ than for $D=1$,
with similar total input variation. Joint-odd symmetry exactly removes
even-total-degree monomials. Deliberate forward-Euler-informed pruning reduces
the model but does not imply that omitted monomials vanish in the RK4 sampled
flow. Maximum soft-state violations are reported.

Future work will investigate variable-rate forgetting for improved
adaptation under changing operating conditions, extend the framework to
systems with time delays \cite{Michiels2025}, and validate the approach on
real-world physical platforms.

\section*{Acknowledgment}

This research was partially supported by the MEXT SPReAD AI for Science program 2026 (Grant No. 26276994).

\bibliographystyle{IEEEtran}
\bibliography{bibliography.bib}

\end{document}